\renewcommand{\d}{\mathrm{d}}
\begin{document}
\title{Symplectic Approach to Global Stability}
%
%
\author{Verónica Errasti Díez \inst{1}\orcidID{0000-0003-0376-7402} \and
Jordi Gaset Rifà\inst{2,3}\orcidID{0000-0001-8796-3149} \and
Manuel Lainz Valcázar\inst{3}\orcidID{0000-0002-2368-5853}}
\authorrunning{V. Errasti Díez et al.}
%
\institute{Department of Mathematics,
CUNEF Universidad, 
Calle Pirineos 55,
28040 Madrid,
Spain 
\\
\email{\{veronica.errasti, jordi.gaset, manuel.lainz\}@cunef.edu}}
\maketitle              
\begin{abstract}
We present a new approach to the problem of proving global stability, based on symplectic geometry and with a focus on systems with several conserved quantities. 
We also provide a proof of instability for integrable systems whose momentum map is everywhere regular. 
Our results take root in the recently proposed  notion of a confining function
and are motivated by ghost-ridden systems, for whom
we put forward the first geometric definition.

\keywords{Global Stability  \and Symplectic Geometry \and Liouville-Mineur-Arnold Theorem.}
\end{abstract}
%
%
%
\section{Motivation}

Global stability refers to stability properties that are satisfied by {\it all} solutions of a system. This feature is to be understood in contrast to local stability, where the focus is on the dynamics around particular points or surfaces. 

Global stability enjoys a lengthy history,
dating back to Lagrange. In its apogee in the 50s and 60s, it was studied for engineering applications wherein the variables do not need to be kept within stringent tolerances; see~\cite{Gyfto} and references therein. In such situations, local Lyapunov stability is too restrictive. Recently, global stability has seen a revived interest from the hand of theoretical physics. Motivated by the latter, we re-examine this notion, adapting it to the language of symplectic geometry.

Specifically, global stability is cornerstone to so-called ghost-ridden systems. These are a class of Hamiltonian systems that aim to simultaneously account for dark energy and quantum gravity, e.g.~\cite{Carroll:2003st,Hawking:2001yt}
and whose physical viability critically depends on
being proven globally stable.

The paper is structured as follows. In section \ref{sec:confining}, we adapt the two main notions of global stability in the physics literature to the dynamics generated by a vector field. We also review the allied concept of a confining function. In section \ref{sec:sym}, we present new results to prove global stability using several conserved quantities and the momentum map. In section \ref{sec:ghost}, we formalize ghost-ridden systems and point at bi-Hamiltonian theory as the befitting framework for their study. 
We draw our conclusions in the final section \ref{sec:conclusions}.

\section{Global Stability for a Vector Field}\label{sec:confining}

Let $M$ be a finite-dimensional smooth manifold in the sense of~\cite{Lee}. Assume that the dynamics is given by a  vector field $X$ on $M$. We restrict attention to two physically relevant global stability notions~\cite{ErrastiDiez:2024hfq},
which can be defined in terms of the integral curves $\gamma$ of $X$. We denote by $I_\gamma\subset \mathbb{R}$ the interval of definition of $\gamma$. We abide by the convention that a set $A$ is relatively compact with respect to a topological space $N$ if $A$ is contained in a compact set of $N$. 

\begin{definition}\label{def:GS} A vector field $X$ on $M$ is:
\begin{description}
    \item[G1 stable] if the image $Im(\gamma)$ is relatively compact with respect to $M$ for all integral curves $\gamma$. This is also known as Lagrange stability.
    \item[G2 stable] if, for all integral  curves $\gamma$, and for all subsets $S\subset I_\gamma$ relatively compact with respect to $\mathbb{R}$, its image $\gamma(S)$ is relatively compact with respect to $M$. This corresponds to $X$ being a complete vector field.
\end{description}
\end{definition}

A non-complete vector field has an integral curve that cannot be defined for all times. This implies that the image of the curve cannot be contained in any compact set of $M$~\cite{Lee}.
Then, we say that $X$ has a {\it blow up}. In particular, $G1$ implies $G2$.

{ 
To illustrate these definitions, consider the manifold $\mathbb{R}^2$ with coordinates $(q,v)$, and the vector fields:
\begin{align}
    X_1=v\frac{\partial}{\partial q}-q\frac{\partial}{\partial v}, \qquad  X_2=v\frac{\partial}{\partial q}, \qquad  X_3=v\frac{\partial}{\partial q}+q^2\frac{\partial}{\partial v}\,.
\end{align}

The first vector field $X_1$ corresponds to the harmonic oscillator and it is $G1$ stable. The second vector field $X_2$ corresponds to the free particle, and it is $G2$ stable but not $G1$. The last vector field $X_3$ is not $G2$ stable because it has at least an integral curve $\gamma(t)=(6t^{-2},-12t^{-3})$ that cannot be defined for all times. In references \cite{Damour:2021fva,Deffayet:2023wdg,ErrastiDiez:2024hfq}, the reader can find additional families of dynamical systems whose global stability has been determined.
}

A classical result states that, if $X$ is compactly supported (for instance, if $M$ is compact), then $X$ is complete~\cite{Lee}. Under this hypothesis, $X$ is straightforwardly $G1$ stable as well. Another classical result states that, if $X$ leaves invariant a proper map, then $X$ is complete. Recall that a proper map is a map $f:M\rightarrow N$ between topological spaces such that the anti-image of compact sets is compact~\cite{Lee:topo}. The latter result has recently been generalized through the notion of a confining function~\cite{ErrastiDiez:2024hfq}.

\begin{definition}\label{def:confining}
Let $M$ be a Hausdorff space
and let $\mathcal{S}$ be a set.
A function $f:M\rightarrow \mathcal{S}$
is \textbf{confining}
if the path-connected components of its level sets
are relatively compact with respect to $M$.
\end{definition}

Let $f^{{-1}}(\mu):=\{x\in M| f(x)=\mu\}\,$.
If $M$ is Hausdorff, a proper map $f:M\rightarrow N$ is confining~\cite{ErrastiDiez:2024hfq}. The converse is not true. For instance, $f:\mathbb{R}^2\rightarrow\mathbb{R}$ given by $f(x,y)=\sin(x^2+y^2)$ is confining but not a proper map.  If $M$ and $N$ are manifolds, $f$ is a smooth function and $\mu$ is a regular point of $N$, then the level set $f^{{-1}}(\mu)$ 
is a closed manifold. As such, its connected components are path-connected. We have the following theorem (adapted from~\cite{ErrastiDiez:2024hfq}):

\begin{theorem}\label{thm:confining}
If there exists a twice differentiable confining function $Q:M\rightarrow N$ such that $X(Q)=0$, then $X$ is $G1$ stable. In particular, it is complete.
\end{theorem}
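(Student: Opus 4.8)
The plan is to show that every integral curve $\gamma$ of $X$ has relatively compact image, since $G1$ stability then follows by definition, and completeness ($G2$) follows from the remark after Definition \ref{def:GS} that $G1$ implies $G2$. Let $\gamma: I_\gamma \to M$ be a maximal integral curve of $X$. First I would exploit the hypothesis $X(Q) = 0$: this says exactly that $Q$ is constant along $\gamma$. Concretely, for $t \in I_\gamma$, one computes $\frac{\d}{\d t}(Q\circ\gamma)(t) = \d Q_{\gamma(t)}(\dot\gamma(t)) = \d Q_{\gamma(t)}(X_{\gamma(t)}) = (X(Q))(\gamma(t)) = 0$, so $Q\circ\gamma \equiv \mu$ for some fixed value $\mu \in N$. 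Hence $Im(\gamma) \subseteq Q^{-1}(\mu)$, the level set of $Q$ through any point of the curve.

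Next I would upgrade this from "contained in a level set" to "contained in a single path-connected component of that level set." This is immediate because $\gamma: I_\gamma \to M$ is continuous and $I_\gamma$ is an interval, hence path-connected (indeed connected); its continuous image is therefore path-connected, and a path-connected subset of $Q^{-1}(\mu)$ lies entirely within one path-connected component $C$ of $Q^{-1}(\mu)$. Now the confining hypothesis on $Q$ applies directly: by Definition \ref{def:confining}, every path-connected component of every level set of $Q$ is relatively compact with respect to $M$, so $C$ is contained in some compact $K \subseteq M$. Therefore $Im(\gamma) \subseteq C \subseteq K$, which is precisely the statement that $Im(\gamma)$ is relatively compact. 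Since $\gamma$ was an arbitrary integral curve, $X$ is $G1$ stable.

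Finally, for completeness: a maximal integral curve with relatively compact image cannot blow up, because (as recalled after Definition \ref{def:GS}, citing \cite{Lee}) a non-complete vector field must have an integral curve whose image escapes every compact set; since no integral curve of $X$ does so, $I_\gamma = \mathbb{R}$ for all $\gamma$, i.e.\ $X$ is complete. This is the escape lemma for flows, which I would simply invoke rather than reprove.

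I do not anticipate a serious obstacle here — the argument is essentially a chaining together of three elementary facts (invariance gives containment in a level set; connectedness of the domain gives containment in one component; the confining property gives relative compactness) plus the standard escape lemma. The only point requiring a modicum of care is the regularity assumption: "twice differentiable" is more than enough to make sense of $X(Q) = 0$ and the chain-rule computation above ($C^1$ would suffice for that step), so the hypothesis is comfortably satisfied; one should just make sure $N$ is assumed Hausdorff (or at least $T_1$) so that a point $\mu$ is a genuine value and level sets behave as expected, which is consistent with the running assumptions in the excerpt.
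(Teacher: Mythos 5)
Your proposal is correct and follows essentially the same route as the paper's own proof: conservation of $Q$ confines each integral curve to a single path-connected component of a level set, which is relatively compact by hypothesis, and completeness then follows from the escape lemma. You merely spell out the intermediate steps (the chain-rule computation and the connectedness of $I_\gamma$) that the paper leaves implicit.
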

\begin{proof}
The condition $X(Q)=0$ means that $Q$ is a conserved quantity of $X$. { Therefore, each integral curve of $X$ lies inside one path-connected component of a level set of $Q$, all of which are relatively compact with respect to $M$. Thus, all integral curves of $X$ are relatively compact with respect to $M$.}
\end{proof}

The converse of theorem \ref{thm:confining} is not true. Namely, there exist $G1$ stable systems without non-trivial conserved quantities. An example is the Kronecker foliation of the torus with irrational slope. In this case, $M$ is compact and so $X$ is $G1$ stable. However, each integral curve is dense in the torus and thus the only possible conserved quantities are constants.

\section{Extension to Several Symmetries }\label{sec:sym}

Let $M$ be a manifold with dimension $2n$. Consider the Hamiltonian system $(\omega, H)$, where $\omega\in\Omega^2(M)$ is a symplectic form and $H$ is a smooth function. Let $X_H$ be the Hamiltonian vector field associated to $(\omega,H)$, { that is \cite{AM-78}
\begin{align}
    i_{X_{H}}\omega=\d H\,.
\end{align}}
In particular, $H$ is a conserved quantity. 

The standard route to proving $G1$ stability in such a system by means of theorem \ref{thm:confining} is as follows. First, check if $H$ is confining.
If not, check if other conserved quantities exist
that are confining. 
In the event that several known conserved quantities exist, none of which is individually confining,
check if a combination thereof is confining.
For $n$ conserved quantities, it is further possible to use integrability techniques to (dis)prove $G1$ stability.
In the following, we show apposite results.

\subsection{$G1$ Stability for Systems with Multiple Conserved Quantities}\label{subsec:severalCQ}

Consider a Hamiltonian symplectic group action $\Phi:G\times M\rightarrow M$ of a Lie group $G$ on $(\omega,H)$. Let $J:M\rightarrow \mathfrak{g}^*$ be a momentum map for this action. 

\begin{proposition}\label{propJ}
    If $J$ is confining, then $X_H$ is $G1$ stable.
\end{proposition}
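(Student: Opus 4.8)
The plan is to reduce Proposition~\ref{propJ} to Theorem~\ref{thm:confining} applied to $Q:=J$. The momentum map takes values in $\mathfrak{g}^*$, which is a finite-dimensional vector space and hence a smooth manifold, so it can play the role of the target $N$ in Theorem~\ref{thm:confining}; moreover a momentum map is smooth, hence in particular twice differentiable, and $M$ is Hausdorff, so ``$J$ confining'' is meaningful in the precise sense of Definition~\ref{def:confining}. It thus remains only to show that $J$ is conserved along the flow of $X_H$, i.e. $X_H(J)=0$.

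This is Noether's theorem, which I would re-derive in one line. For $\xi\in\mathfrak{g}$, set $J_\xi:=\langle J,\xi\rangle$ and let $\xi_M$ be the fundamental vector field of the action. Combining the defining relation of the momentum map, $i_{\xi_M}\omega=\d J_\xi$, with $i_{X_H}\omega=\d H$ yields
\[
  X_H(J_\xi)=\d J_\xi(X_H)=\omega(\xi_M,X_H)=-\omega(X_H,\xi_M)=-\d H(\xi_M)=-\xi_M(H)\,.
\]
Since the action leaves $H$ invariant, each fundamental vector field annihilates $H$, so $\xi_M(H)=0$ and therefore $X_H(J_\xi)=0$. Ranging $\xi$ over a basis of $\mathfrak{g}$ shows that every component of $J$ is constant along the integral curves of $X_H$, that is, $X_H(J)=0$.

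With $Q=J$ we then have a twice differentiable confining function satisfying $X_H(Q)=0$, so Theorem~\ref{thm:confining} applies and gives that $X_H$ is $G1$ stable, in particular complete. This completes the argument.

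I do not expect a genuine obstacle: the proof is essentially the observation that a confining momentum map, being conserved by Noether's theorem, is exactly the kind of conserved quantity that Theorem~\ref{thm:confining} requires. The only points needing a moment's care are bookkeeping ones: that the $\mathfrak{g}^*$-valued map $J$ fits the setting of Theorem~\ref{thm:confining}, and that the hypothesis ``Hamiltonian symplectic group action on $(\omega,H)$'' is understood to include invariance of $H$, which is precisely what forces $\xi_M(H)=0$.
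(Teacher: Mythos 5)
Your proposal is correct and follows essentially the same route as the paper: the paper likewise observes that invariance of $H$ under the action forces the flow of $X_H$ to preserve $J$ (citing~\cite{AM-78} for this Noether-type fact rather than re-deriving it as you do), and then concludes exactly as in Theorem~\ref{thm:confining} that each integral curve lies in a path-connected component of a level set of $J$, hence in a compact set. Your explicit one-line derivation of $X_H(J_\xi)=-\xi_M(H)=0$ and the bookkeeping remark that $\mathfrak{g}^*$ is a manifold so that Theorem~\ref{thm:confining} literally applies are both accurate and harmless additions.
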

\begin{proof}
    If $H$ is invariant under $\Phi$, then the flow $F_t$ of $X_H$ leaves invariant $J$ \cite{AM-78} in the sense that, for all $x\in M$,
    \begin{align}
        J(F_t(x))=J(x)\,.
    \end{align}
    Therefore, each integral curve of $X_H$ is inside one path-connected component of a level set of $J$. Hence, they are all relatively compact with respect to $M$.
\end{proof}
{ 
Given $r$ functions $Q_i:M\rightarrow \mathbb{R}$, for $i=1,\dots,r$ on $M$, and a map $g:\mathbb{R}^r\rightarrow V$, the functional combination of $\{Q_i\}_{i=1,\dots,r}$ by $g$ is the map $g(Q_1,\dots,Q_r):M\rightarrow V$. If $Q_i$ are conserved quantities, $V$ is a manifold and $g$ is also at least once differentiable, then $g(Q_1,\dots,Q_r)$ is another conserved quantity. 

Consequently, it is reasonable to ask if a functional combination of the conserved quantities is confining. 
}As it turns out, if the momentum map $J$ is not confining (proper), then no functional combination of the conserved quantities is confining (proper).

\begin{lemma}\label{prop:composition}
  Let $M$ be a Hausdorff space and let $N$ and $V$ be two sets. If $f:M\rightarrow N$ is not confining, then for any function $g:N\rightarrow V$, $g\circ f$ is not confining.
\end{lemma}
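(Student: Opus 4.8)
The plan is to prove the contrapositive: assuming $g \circ f$ \emph{is} confining, I would show that $f$ must be confining as well. Concretely, let $\mu \in N$ be any point in the image of $f$, and let $C$ be a path-connected component of the level set $f^{-1}(\mu)$. The goal is to exhibit a compact set of $M$ containing $C$.

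\textbf{Key steps.} First I would observe that $f^{-1}(\mu) \subseteq (g \circ f)^{-1}(g(\mu))$, since any $x$ with $f(x) = \mu$ satisfies $g(f(x)) = g(\mu)$. Next, since $C$ is path-connected and $C \subseteq (g \circ f)^{-1}(g(\mu))$, the set $C$ is contained in a single path-connected component $D$ of the level set $(g \circ f)^{-1}(g(\mu))$ — indeed, a path-connected subset of any space lies in one path-connected component. By the hypothesis that $g \circ f$ is confining, $D$ is relatively compact with respect to $M$, i.e. $D$ is contained in some compact $K \subseteq M$. Hence $C \subseteq D \subseteq K$, so $C$ is relatively compact with respect to $M$. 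Since $C$ was an arbitrary path-connected component of an arbitrary level set of $f$, the function $f$ is confining, contradicting the assumption. Therefore $g \circ f$ is not confining.

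\textbf{Anticipated obstacle.} There is essentially no hard analytic step here; the argument is purely set-theoretic and topological, and the Hausdorff hypothesis on $M$ is not even needed for this direction (it is presumably retained only for uniformity with Definition~\ref{def:confining}). The one point demanding a little care is the claim that a path-connected subset of a topological space lies in a single path-connected component — this is immediate from the definition of the components as the equivalence classes of the "joined by a path" relation, but it is worth stating explicitly rather than leaving implicit. I would also make sure to handle the trivial case where $\mu \notin \mathrm{Im}(f)$ (then $f^{-1}(\mu) = \emptyset$, which has no components, so nothing to check), so that the quantification over level sets is clean.
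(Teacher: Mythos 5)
Your argument is correct and is essentially the paper's proof read contrapositively: both rest on the containment $f^{-1}(\mu)\subseteq (g\circ f)^{-1}(g(\mu))$ and on the fact that a path-connected component of $f^{-1}(\mu)$ sits inside a single path-connected component of the larger level set. The paper states this directly (a non-relatively-compact component of $f^{-1}(c)$ forces the component of $(g\circ f)^{-1}(g(c))$ containing it to be non-relatively-compact), but the content is identical, and your observation that the Hausdorff hypothesis is not actually used here is accurate.
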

\begin{proof}
If $f$ is not confining, then there exists a point $c\in N$ such that a path-connected component $F$ of $f^{{-1}}(c)$ cannot be contained in any compact set of $M$. Since $c\in g^{{-1}}(g(c))$, we have that $(g\circ f)^{{-1}}(g(c))\supset f^{{-1}}(c)$. In particular, it has a path connected component $G\subset (g\circ f)^{{-1}}(g(c))$ that contains $F$. Thus, $G$ cannot be contained in any compact set of $M$.
\end{proof}

\begin{lemma}\label{lem:proper}
Let $N$ be a Hausdorff space and let $M$ and $V$ be two topological spaces. If $f:M\rightarrow N$ is continuous but not a proper map, then for any continuous function $g:N\rightarrow V$, $g\circ f$ is not a proper map.
\end{lemma}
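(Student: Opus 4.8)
The plan is to mimic the structure of the proof of Lemma~\ref{prop:composition}, but to track compactness of preimages rather than relative compactness of level-set components. Unwinding the definition, the hypothesis that $f$ is not proper means there is a compact set $K\subseteq N$ whose preimage $f^{-1}(K)$ fails to be compact in $M$; the goal is to exhibit a compact set $K'\subseteq V$ for which $(g\circ f)^{-1}(K')$ is not compact.

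The natural candidate is $K':=g(K)$. Since $g$ is continuous, $g(K)$ is compact in $V$. Moreover $K\subseteq g^{-1}(g(K))$, so $f^{-1}(K)\subseteq f^{-1}(g^{-1}(g(K)))=(g\circ f)^{-1}(g(K))$, exactly as in the proof of Lemma~\ref{prop:composition}. It then remains to upgrade ``$f^{-1}(K)$ is not compact'' to ``$(g\circ f)^{-1}(g(K))$ is not compact'', and this is the one place where the hypotheses are used in an essential way.

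The point is that an arbitrary subset of a compact space need not be compact, so the inclusion alone does not suffice; however, a \emph{closed} subset of a compact space is always compact. Because $N$ is Hausdorff, the compact set $K$ is closed in $N$; because $f$ is continuous, $f^{-1}(K)$ is then closed in $M$, hence closed as a subspace of $(g\circ f)^{-1}(g(K))$. If this latter set were compact, $f^{-1}(K)$ would be a closed subset of a compact space and therefore compact, contradicting the choice of $K$. Hence $(g\circ f)^{-1}(g(K))$ is not compact while $g(K)$ is compact in $V$, so $g\circ f$ is not a proper map.

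I expect the only genuine subtlety --- and the reason the Hausdorff hypothesis on $N$ is needed --- to be precisely this last step: without the closedness of $f^{-1}(K)$, non-compactness would not propagate upward along the inclusion. Everything else is a direct chase through preimages, combined with the elementary facts that continuous images of compact sets are compact and that closed subspaces of compact spaces are compact.
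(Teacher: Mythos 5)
Your proof is correct and follows essentially the same route as the paper's: both take $K'=g(K)$, use the inclusion $f^{-1}(K)\subseteq (g\circ f)^{-1}(g(K))$, and conclude by noting that $f^{-1}(K)$ is closed (since $N$ is Hausdorff and $f$ is continuous) and that a closed subset of a compact set is compact. Your write-up merely makes explicit the role of the Hausdorff hypothesis, which the paper states more tersely.
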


\begin{proof}
 If $f$ is not a proper map, then there exists a compact set $K\subset N$ such that $f^{{-1}}(K)$ is not compact, even though it is closed because $f$ is continuous and $N$ is Hausdorff. Since $g$ is continuous, $g(K)$ is compact. Moreover, $(g\circ f)^{-1}(g(K))\supset f^{{-1}}(K)$. If $(g\circ f)^{-1}(g(K))$ were compact, then $f^{{-1}}(K)$ would be compact, which is a contradiction.
\end{proof}

The converse of lemma \ref{lem:proper} holds true in the following case. Assume that the Hamiltonian system $(\omega,H)$ has conserved quantities $\{Q_i\}_{i=1,\dots ,r\leq n}$. Set $J:M\rightarrow \mathbb{R}^r$ as $J=(Q_1,\dots,Q_r)$ and define the function $P:M\rightarrow\mathbb{R}$ as $P:=\displaystyle \sum_{i=1}^rQ_i^2$.
        
\begin{proposition}{  The map $J=(Q_1,\dots,Q_r)$ is a proper map if, and only if, $P$ is a proper map.    }
\end{proposition}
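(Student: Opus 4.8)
The plan is to prove both implications by exploiting the fact that $P$ is a continuous function of $J$, while $J$ is a continuous function of $P$ only ``up to a sphere's worth of ambiguity''. Write $P = \sigma \circ J$ where $\sigma:\mathbb{R}^r \to \mathbb{R}$ is the continuous map $\sigma(x_1,\dots,x_r) = \sum_i x_i^2 = \|x\|^2$. One direction is then immediate: if $J$ is a proper map, then since $\sigma$ is continuous, Lemma~\ref{lem:proper} does not directly help (it gives non-properness, not properness), so instead I would argue directly that $P = \sigma \circ J$ is proper as a composition of a proper map with a map $\sigma$ that is itself proper (preimages of compact sets under $\sigma$ are closed and bounded, hence compact). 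The composition of two proper maps between, say, locally compact Hausdorff spaces is proper, so $P$ is proper.

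For the converse, suppose $P$ is proper; I want to show $J$ is proper, i.e. that $J^{-1}(K)$ is compact for every compact $K \subset \mathbb{R}^r$. The key observation is that any compact $K \subset \mathbb{R}^r$ is bounded, so $K \subset \overline{B}(0,R)$ for some $R>0$, and hence $K \subset \sigma^{-1}([0,R^2])$. Therefore $J^{-1}(K) \subset J^{-1}(\sigma^{-1}([0,R^2])) = P^{-1}([0,R^2])$, which is compact because $P$ is proper and $[0,R^2]$ is compact. Since $J$ is continuous and $\mathbb{R}^r$ is Hausdorff, $J^{-1}(K)$ is closed; a closed subset of the compact set $P^{-1}([0,R^2])$ is compact. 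Hence $J^{-1}(K)$ is compact and $J$ is proper.

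The main thing to be careful about — and the only real obstacle — is the ``composition of proper maps is proper'' step in the forward direction, which is not true in full generality for arbitrary topological spaces but does hold here: $M$ is a (second-countable, locally compact Hausdorff) manifold, and $\sigma:\mathbb{R}^r \to \mathbb{R}$ is a proper continuous map between locally compact Hausdorff spaces, so the composition $\sigma \circ J$ is proper whenever $J$ is. Alternatively, and more cleanly, I would sidestep the general composition lemma entirely and mimic the converse argument symmetrically: for $[a,b] \subset \mathbb{R}$ compact, $P^{-1}([a,b]) = J^{-1}(\sigma^{-1}([a,b])) \subset J^{-1}(\overline{B}(0,\sqrt{b}))$, and the latter is compact since $J$ is proper and $\overline{B}(0,\sqrt{b})$ is compact in $\mathbb{R}^r$; as $P^{-1}([a,b])$ is closed (by continuity of $P$), it is a closed subset of a compact set, hence compact. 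This makes both directions completely parallel and uses nothing beyond continuity, the Heine--Borel theorem in $\mathbb{R}^r$, and the fact that closed subsets of compact sets are compact, so no appeal to manifold structure or to a general composition theorem is needed.
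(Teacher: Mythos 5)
Your proof is correct and follows essentially the same route as the paper's: the forward direction is the paper's argument with a closed ball in place of the compact cube $K^r$, and your direct argument for the converse is precisely the content of the paper's Lemma~\ref{lem:proper} (which the paper invokes by contraposition) specialized to $g=\sigma$. One small remark: your caution about composing proper maps is unnecessary, since with the definition ``preimages of compact sets are compact'' the composition of proper maps is proper for arbitrary topological spaces, as $(\sigma\circ J)^{-1}(C)=J^{-1}\left(\sigma^{-1}(C)\right)$ shows immediately.
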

\begin{proof}
Let $I\subset\mathbb{R}$ be a compact set with maximum $b\in\mathbb{R}$. Define $K:=[-\sqrt{|b|},\sqrt{|b|}]$ and let $K^r$ denote the Cartesian product of $r$ copies of $K$. Then, $Q_i(P^{{-1}}(I))\subset K$ for all $i$. Therefore, $
        P^{{-1}}(I)\subset\bigcap_{i=1}^rQ_i^{{-1}}(K)$. Moreover,
    \begin{align}
        J^{{-1}}(K^r)=(Q_1,\dots,Q_r)^{{-1}}(K^r)=\bigcap_{i=1}^rQ_i^{{-1}}(K)\,
    \end{align}
    is a compact set because $J$ is a proper map and $K^r$ is a compact set. Therefore, $P^{{-1}}(I)$ is inside a compact set. Besides, it is closed because $P$ is a continuous function. Thus, $P^{{-1}}(I)$ is a compact set.

    Conversely, if $J$ is not proper, then $P$ is not proper by lemma \ref{lem:proper}\,.
\end{proof}

\subsection{Global Stability for Integrable Systems}\label{subsec:unstable}

Consider a Hamiltonian system $(\omega,H)$ with $n$ conserved quantities $\{Q_i\}_{i=1,\dots n}$ in involution $\{Q_i,Q_j\}=0$ such that $\d Q_i$ are generically independent. { The bracket $\{\cdot,\cdot\}$ is the Poisson bracket defined by the symplectic form $\omega$ }. This implies that $H$ is a functional combination of the $Q_i$'s. The corresponding Hamiltonian vector fields, or symmetries, { are the unique vector fields $X_{Q_i}$ satisfying }
\begin{align}
    i_{X_{Q_i}}\omega=\d Q_i\,.
\end{align}
\noindent
Notice that such a system is not necessarily integrable because the symmetries may not be complete.

A geometric characterization is provided by a map $F:M\rightarrow \mathbb{R}^n$ that is generically a fibration with Lagrangian leaves on which the Hamiltonian is constant. We can be on one of the following three situations.

First, $F$ is confining. This is the standard Liouville-Mineur-Arnold theorem case. The  connected components of generic level sets of $F$ are tori and action-angle variables are defined on a neighbourhood of these tori. Thus, the integral curves are trapped inside tori and the system is $G1$ stable.

Second, the vector fields $\{X_{Q_i}\}_{i=1,\dots,n}$ generate a transitive action on the connected components of the level sets of $F$. The Liouville-Mineur-Arnold theorem for this case \cite{AM-78} states that the connected components of the level sets of regular points are not tori, but cylinders $\mathbb{T}^k\times\mathbb{R}^{n-k}$, with $k=0,\dots,n$. Moreover, on a neighbourhood of each cylinder there exist action-angle coordinates $(\theta^a, s^A,I_a,\bar{I}_A)$, with $a=1,\dots,k$ and $A=k+1,\dots,n$, such that $\theta^a$ are position coordinates on $\mathbb{T}^k$, $s^A$ are position coordinates on $\mathbb{R}^{n-k}$ and $(I_a,\bar{I}_A)$ are the respective momenta \cite{FGS-2003}. Furthermore, the equations of motion are
    \begin{align}
    \dot{\theta^a}&=\frac{\partial H (I_a,\bar{I}_A)}{\partial I_a}=:\omega_a(I_a,\bar{I}_A)\,,\quad  \dot{I_a}=0\,,
    \\
    \dot{s}^A&=\frac{\partial H(I_a,\bar{I}_A)}{\partial \bar{I}_A}=:\bar{\omega}_A(I_a,\bar{I}_A)\,,\quad  \dot{\bar{I}}_A=0\,.
\end{align}
Thus, the dynamics of $X_H$ is that of a uniform rectilinear motion on each cylinder. Note that the existence of these coordinates guarantees the 
absence of blow ups. Hence, the system is $G2$ stable.

If $\bar{\omega}_A(I_a,\bar{I}_A)$ is not identically zero on some cylinder, then the coordinate $s^A(t)$ will take arbitrarily large and small values. Hence, the system is $G2$ stable on the cylinders but not $G1$ stable. 

If $\bar{\omega}_A(I_a,\bar{I}_A)$ is identically zero for all $A$ and every cylinder, then the system does not move along the $\mathbb{R}^k$ directions. That is, the dynamics is confined to the tori $\mathbb{T}^{n-k}$ inside each cylinder. Therefore, the system is $G1$ stable on the cylinders. Moreover, in this situation, $s^A$ are also conserved quantities. Furthermore, the map 
\begin{align}
    (Q_i,\dots,Q_n,s^{k+1},\dots,s^n):M\rightarrow \mathbb{R}^{2n-k}
\end{align}
is confining because the connected components of its level sets are tori, which are compact.

Third, the flow of the Hamiltonian vector fields $X_{Q_i}$ are not all complete. In this case, the Liouville-Mineur-Arnold theorem does not apply and blow ups can occur.

The best scenario can be summarized in the following result.

\begin{theorem}\label{thm:iff} Let $(\omega, H)$ be a Hamiltonian system with conserved quantities $\{Q_i\}_{i=1,\dots,n}$ in involution and independent everywhere. Assume that the Hamiltonian vector fields $X_{Q_i}$ are complete on the common level sets of the $Q_i$'s. Then, the system is $G1$ stable if, and only if, there exists a smooth map $F:M\rightarrow\mathbb{R}^m$, with $m\geq n$, that is invariant by the dynamics and  confining.    
\end{theorem}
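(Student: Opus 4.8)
The plan is to prove the two implications separately. The ``if'' direction is immediate: if $F:M\to\mathbb{R}^m$ is smooth, confining, and invariant by the dynamics (that is, $X_H(F)=0$), then $F$ is a twice differentiable confining conserved quantity of $X_H$, so Theorem~\ref{thm:confining} applies directly and gives that $X_H$ is $G1$ stable (and, incidentally, complete). The bound $m\geq n$ plays no role here; it merely records what the converse construction produces.

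For the ``only if'' direction I would organize the argument around the trichotomy discussed just before the statement. Everywhere-independence and involutivity make $F_0:=(Q_1,\dots,Q_n):M\to\mathbb{R}^n$ a submersion onto its (open) image with regular Lagrangian level sets, and completeness of the $X_{Q_i}$ on those level sets is precisely what makes the non-compact Liouville--Mineur--Arnold theorem applicable near every point \cite{FGS-2003}: it rules out the blow-up case, and in the others it supplies, around each point, action--angle coordinates $(\theta^a,s^A,I_a,\bar I_A)$ in which the local leaf is a cylinder $\mathbb{T}^k\times\mathbb{R}^{n-k}$ (with $k$ locally constant) and $X_H$ has the displayed linear form, the frequencies $\omega_a,\bar\omega_A$ depending only on the actions. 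Next, $G1$ stability forces $\bar\omega_A\equiv 0$ on every cylinder: otherwise some orbit would obey $s^A(t)=s^A(0)+\bar\omega_A t$ with $\bar\omega_A\neq 0$, hence be unbounded inside the connected component of its $F_0$-level set; but that component is a closed submanifold of $M$ diffeomorphic to $\mathbb{T}^k\times\mathbb{R}^{n-k}$, so the orbit would leave every compact subset of $M$, contradicting $G1$ stability. Therefore $X_H(s^A)=0$, the local position coordinates $s^A$ are first integrals, and, as recalled before the statement, $(Q_1,\dots,Q_n,s^{k+1},\dots,s^n)$ is \emph{locally} confining, its level-set components being tori.

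The remaining --- and main --- task is to upgrade this local data to a single global smooth map of finite rank. I would pull back a partition of unity from $F_0(M)\subseteq\mathbb{R}^n$ along $F_0$; since $F_0$ is a conserved quantity, the resulting bump functions are flow-invariant. Using that $F_0(M)$ has Lebesgue covering dimension at most $n$, pick an open cover $V_0,\dots,V_n$ of $F_0(M)$ with each $V_j$ a disjoint union of balls small enough that over each ball $F_0^{-1}$ is a trivial cylinder bundle (this is where local constancy of $k$ enters); on $F_0^{-1}(V_j)$ let $\sigma_j$ be defined, on each trivial piece, by the $\mathbb{R}^{n-k}$-position coordinates padded by zeros to $\mathbb{R}^n$; this $\sigma_j$ is smooth, flow-invariant, and injective along the $\mathbb{R}^{n-k}$-directions. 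Then $F:=\big(Q_1,\dots,Q_n,(\rho_0\circ F_0)\,\sigma_0,\dots,(\rho_n\circ F_0)\,\sigma_n\big):M\to\mathbb{R}^{n+(n+1)n}$ is smooth and invariant by the dynamics (each block is), and it is confining: a path-connected component of a level set of $F$ lies inside a level-set component $L\cong\mathbb{T}^k\times\mathbb{R}^{n-k}$ of $F_0$, on which at least one weight $\rho_j$ is positive, so the position along the $\mathbb{R}^{n-k}$-directions is recovered from $F$ and the component is forced to be a torus $\mathbb{T}^k\times\{\mathrm{pt}\}$, which is compact. As $n+(n+1)n\geq n$, this $F$ has all the required properties.

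I expect the genuine difficulty to lie entirely in this globalization, for two reasons. First, one needs the cylinder type $k$ to be locally constant, so that $F_0^{-1}$ of a small ball is an honest bundle; this is part of the non-compact Liouville--Mineur--Arnold package under the everywhere-regularity hypothesis, and, absent it, the rank-dropping leaves would have to be treated by a separate limiting argument. Second, one must keep the assembled map \emph{simultaneously} flow-invariant and injective along the non-compact directions of each leaf: this is why the partition of unity is pulled back through $F_0$ (so its members are first integrals) and why the local fibre-coordinates are concatenated over a finite, dimension-controlled cover rather than convex-combined --- a convex combination of injective affine maps can fail to be injective.
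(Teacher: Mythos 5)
Your overall route is the same as the paper's: the theorem is stated there as a summary of the case analysis that precedes it, so its ``proof'' consists of (i) the \emph{if} direction being exactly Theorem~\ref{thm:confining}, and (ii) the \emph{only if} direction being the dichotomy $\bar\omega_A\equiv 0$ versus $\bar\omega_A\not\equiv 0$ on the cylinders supplied by the non-compact Liouville--Mineur--Arnold theorem, followed by the observation that in the former case $(Q_1,\dots,Q_n,s^{k+1},\dots,s^n)$ is an invariant confining map. Your argument that $G1$ stability forces $\bar\omega_A\equiv 0$ (using that each fibre component is a closed embedded cylinder, so an orbit unbounded in the $s$-coordinates cannot sit inside a compact subset of $M$) is the intended one. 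Where you genuinely go beyond the paper is in noticing that the $s^A$ are only \emph{semi-locally} defined --- on a neighbourhood of each cylinder --- and therefore must be glued into a single smooth map on $M$; the paper writes the map with target $\mathbb{R}^{2n-k}$ as if $k$ were globally constant and the $s^A$ globally defined, and this is indeed the step that needs an argument.

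Your gluing, however, has a concrete gap. You assume that $F_0(M)$ can be covered by balls $B$ over which $F_0^{-1}(B)$ splits as a disjoint union of trivial cylinder bundles, and you build the invariant weights by pulling a partition of unity back through $F_0$. The theorem of \cite{FGS-2003} does not give trivializations over preimages of open sets in the base: it trivializes a \emph{saturated neighbourhood of a single connected component of a fibre}. Nothing in the hypotheses (everywhere-independence plus completeness of the $X_{Q_i}$ on the level sets) forces the fibres of $F_0$ to be connected, and when a fibre has several components the preimage of an arbitrarily small ball can remain connected while containing fibres with different numbers of components, or components of different type $k$; no choice of ball then yields your trivial pieces, and local constancy of $k$ fails in the form you need. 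Worse, any function of the form $\rho_j\circ F_0$ is constant on entire fibres, so it cannot separate two components of the same fibre: the partition of unity you need lives on the space of fibre components, which may be non-Hausdorff, and its existence requires a separate argument (or an added hypothesis such as connectedness of the fibres, under which your construction does go through). The paper's own formulation is silent on this point as well, so your proposal makes the right difficulty visible even though it does not yet close it.
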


\section{An application to physics: ghost stabilization}\label{sec:ghost}

During the last three decades, there has been renewed and increased interest in a class of dynamical systems dubbed {\it ghost-ridden}. Their relevance lies in their twofold nature as candidates for dark energy and quantum gravity, e.g.~\cite{Carroll:2003st,Hawking:2001yt}. In spite of their wide use in theoretical physics, ghost-ridden systems lack a formal definition. In observance of the vast literature and following the comments of Smilga~\cite{Smilga:2008pr} on the work by Bender and Mannheim~\cite{Bender:2007wu} on the Pais-Uhlenbeck oscillator~\cite{PU}, we put forward the following formal definition. 

\begin{definition}
\label{def:ghost}
   {  A vector field $X$ on a manifold $M$ is ghost-ridden if, for all Hamiltonian systems $(\omega, H)$ on a manifold $M$ such that $i_X\omega=\d H$, $H$ is neither bounded from above nor from below.}
\end{definition}

A critical property that ghost-ridden systems must possess for physical viability is global stability.
Hence, ongoing efforts concentrate on the quest for $G1$ and/or $G2$ stable ghost-ridden systems. 
A prominent strategy propounds the construction
of systems with a completely unbounded Hamiltonian whose dynamics are globally stable.
Then, the primary difficulty consists in the ubiquitous emergence
of bi-Hamiltonian systems that do not fulfil definition \ref{def:ghost}, in the sense that only some of their Hamiltonians are completely unbounded, e.g.~\cite{Damour:2021fva,Kaparulin:2014vpa}.

To date, no globally stable system has been proven ghost-ridden. An enticing candidate was proposed in~\cite{ErrastiDiez:2024hfq}, where neither the Hamiltonian nor the confining conserved quantity that warrants $G1$ stability are bounded from above or below. Another candidate
is a certain modification of the Korteweg-de Vries partial differential equations~\cite{Smilga:2020elp}.
While global stability has not been proven,
there exist auspicious partial results in this regard~\cite{Fring:2024brg}.
In neither example, no systematic study of the underlying bi-Hamiltonian structure has been performed to confirm the ghost-ridden nature.
\newpage

\section{Summary and outlook}\label{sec:conclusions}

We have adapted the concepts of $G1$ and $G2$ global stability, as well as that of a confining function, to the context of the dynamics generated by a vector field. See definitions \ref{def:GS} and \ref{def:confining}. For the study of global stability, confining functions seem more adequate than proper maps.

We have presented new results on the global stability of systems with several known conserved quantities. We have shown that, if the momentum map is confining, then the system is $G1$ stable (proposition \ref{propJ}). Moreover, if the momentum map is not confining, then no functional combination of the conserved quantities can be confining (lemma \ref{lem:proper}).  

We have provided a more exhaustive analysis in the fortunate case that, on a $2n$-dimensional symplectic manifold, at least $n$ independent conserved quantities in involution are known. We have shown that, for complete symmetries on the level sets of regular points, the integral curves are inside a compact set if, and only if, there exists an invariant confining map. If all points are regular and symmetries are complete, then the result holds globally, see theorem \ref{thm:iff}.

In the theory of integrable systems, it is common to assume as a hypothesis that the Hamiltonian vector fields are complete, or that a certain foliation is compact. Either of these are
unwarranted assumptions in the problem of present interest, whose crucial aim is to prove such properties. Nevertheless, the language and objects of this theory provide an adequate starting point. In particular, it will be insightful to generalize the Liouville-Mineur-Arnold theorem to the case where the symmetries are not complete. This is the third case listed in section \ref{subsec:unstable}, which comprises work in progress.

We have put forward a formal definition \ref{def:ghost} of ghost-ridden systems  that reveals bi-Hamiltonian structures as the natural framework for their investigation. The endeavor for globally stable ghost-ridden systems is producing a plethora of bi-Hamiltonian systems, e.g.~\cite{ErrastiDiez:2024hfq,Kaparulin:2014vpa,Smilga:2008pr}, whose ghostly nature and/or global stability remain obscure. The use of bi-Hamiltonian techniques could shed light on the subject, which has never been explored from a geometric point of view, to the best of our knowledge.

Thus far, global stability proofs hinge on the existence of conserved quantities with well-disposed properties. However, this is not required by definition.
For three examples for which there exists numerical evidence in favor of global stability
without any confining conserved quantity, see~\cite{Deffayet:2023wdg}. The development of a technique to prove global stability without appealing for conserved quantities would entail a paradigm shift in ghost-ridden systems and comprises a compelling line of future research.

\begin{credits}
\subsubsection{\ackname} 
J.G.R. acknowledges financial support from Ministerio de Ciencia, Innovaci\'on y Universidades (Spain), PID2021-125515NB-21.

M. L. acknowledges financial support from Ministerio de Ciencia, Innovaci\'on y Universidades (Spain), PID2022-137909NB-C21.

\subsubsection{\discintname}
The authors have no competing interests to declare that are
relevant to the content of this article. 
\end{credits}
%
%
%
%
\newpage

\end{document}